\newtheorem{theorem}{Theorem}[section]
\newtheorem{lemma}{Lemma}[section]
\newtheorem{corollary}{Corollary}[section]
\numberwithin{equation}{section}
\newcommand{\A}{\mathcal{A}}
\newcommand{\B}{\mathcal B}
\newcommand{\medn}{\medbreak \noindent}
\begin{document}

\title{Separability, Contextuality, and the Quantum Frame Problem
}

\author{Chris Fields\\
Allen Discovery Center\\
Tufts University\\
Medford, MA 02155, USA
\\
fieldsres@gmail.com\\
ORCID: 0000-0002-4812-0744 \\
\\and\\
\\James  F. Glazebrook
\\Department of Mathematics and Computer
Science \\
 Eastern Illinois University,
600  Lincoln Ave.\\ Charleston, IL 61920--3099, USA \\
jfglazebrook@eiu.edu
\\ Adjunct Faculty
\\ Department of Mathematics \\ University of Illinois at
Urbana--Champaign\\ Urbana, IL 61801, USA\\
ORCID: 0000-0001-8335-221X
}

\maketitle

\begin{abstract}
We study the relationship between assumptions of state separability and both preparation and measurement contextuality, and the relationship of both of these to the frame problem, the problem of predicting what does not change in consequence of an action.  We state a quantum analog of the latter and prove its undecidability.  We show how contextuality is generically induced in state preparation and measurement by basis choice, thermodynamic exchange, and the imposition of {\em a priori} causal models, and how fine-tuning assumptions appear ubiquitously in settings characterized as non-contextual.
\end{abstract}

\medn
\textbf{Keywords}: Cone-Cocone Diagram, G\"{o}del's theorem, Markov blanket, Measurement, Quantum Reference Frame, Task environment, Undecidability

%\tableofcontents

%%%%%%%%%%%%%%%%%%%%%%%%%%%%%%%%%%%%%%%%%%%%%%%%%%%%%%%%%%%%%%%%%%%%%%%%%%%%%%%%%%%%%%%%%%%%%%%%%

\section{Introduction}\label{intro}

A theory exhibits contextuality, in the sense of Bell \cite{bell:66} or Kochen and Specker \cite{kochen:67}, if the outcome of some operation allowed by the theory can depend on what other operations are performed simultaneously \cite{Fine1982, peres:91, mermin:93, spekkens:05}.\footnote{This is sometimes called {\em quantum} or {\em intrinsic} contextuality to distinguish it from classical, causal, context dependence.}  Over the past decade, and in parallel with increasing recognition of its importance as a resource for quantum computing \cite{raussendorf:13, howard:14, vega:17, frembs:18, mansfield:18, amaral:19}, a number of formal representations of contextuality have been developed and their relevance to various operational settings investigated \cite{abramsky:11, abramsky:14b, abramsky:17, abramsky:18, cabello:14, cabello:15, cabello:16, cabello:21,dzha:17a, Dzha2017b, Dzha18, gudder:19, fg:21, fgm:22}; see the recent scholarly review by Adlam \cite{adlam:21} addressing several interpretative issues along with the various nuances raised by these and other related works.  Here we contribute to this progression of ideas by reformulating contextuality in terms of a decision problem that can be proved to be undecidable.  This decision problem is a quantum analog of the frame problem (FP) \cite{mccarthy:69, dietrich:96, nakashima:97, fields:13b, shanahan:16, dietrich:20}, a well-known problem in Artificial Intelligence (AI) that was first characterized, ironically, only a year after the publication of the Kochen-Specker theorem.  Informally, the FP is the problem of efficiently specifying what does not change in consequence of an action \cite{mccarthy:69}.  It is thus a problem about characterizing an action's side-effects, which can be expected to depend on the action's context.  Hence the FP can also be seen as the problem of efficiently specifying, or circumscribing, the context or, in AI terms, task environment that is in fact relevant to an action.

As in \cite{bell:66, kochen:67}, the operations or actions of interest here are measurement and, dually \cite{pegg:10}, preparation of a quantum state.  We represent these operations as physical interactions between a finite quantum system regarded as an ``agent'' ($A$ or Alice) and some other, disjoint finite quantum system ($B$ or Bob) that make use of some finite collection of finite quantum reference frames (QRFs  \cite{aharonov:84, bartlett:07}).  Following the development in \cite{fg:20a, addazi:21, fgm:21, ffgl:22, fgm:22b}, we assume that the joint state $|AB \rangle$ is separable over the time interval of interest, allowing us to write the interaction as:

\begin{equation} \label{ham}
H_{AB} = \beta_k k_B\, T_k \sum_i^N \alpha^k_i M^k_i,
\end{equation}
\noindent
where $k =~A$ or $B$, $k_B$ is Boltzmann's constant, $T_k$ is temperature, the $\alpha^k_i \in [0,1]$ are such that $\sum^N_i \alpha^k_i = 1$, the $M^k_i$ are $N$ Hermitian operators with eigenvalues in $\{ -1,1 \}$, and $\beta_k \geq \ln 2$ is an inverse measure of $k$'s thermodynamic efficiency that depends on the internal dynamics $H_k$.  We prove in \cite{fgm:22} that in any interaction given by Eq. \eqref{ham}, the QRFs deployed by either system $k$ can be represented, without loss of generality, by finite category-theoretic structures, cone-cocone diagrams (CCCDs) \cite{fg:19a} of Barwise-Seligman classifiers (or classifications) \cite{barwise:97}, that specify quantum computations implemented by $H_k$.  This representation effectively bundles the QRF employed to perform an operation (e.g. a length standard) together with the components of $H_{AB}$ that are employed (i.e. some subset of the $M^k_i$); we henceforth use the term `QRF' to refer to this combined representation.  We have previously shown, using this representation, that noncommutativity of QRFs induces contextuality \cite{fg:21}.  This result effectively reformulates previous criteria for representations of contextuality: noncommutativity of QRFs acting on sets of variables $\{ x_i \}$ and $\{ y_i \}$ corresponds both to the non-existence of a well-defined, well-behaved joint probability distribution over the $\{ x_i \}$ and $\{ y_i \}$ and to the non-existence of a well-defined section of a sheaf constructed jointly over the $\{ x_i \}$ and $\{ y_i \}$.

Here we first extend these results, in \S \ref{context-QRF}, by showing that contextuality is induced whenever Alice comprises mutually-separable components, i.e. components that maintain a separable joint state, that implement distinct QRFs \cite{fgm:23}.  We then state in \S \ref{QFP} a decision problem, the {\em Quantum Frame Problem} (QFP \cite{dietrich:20}), and prove that it is algorithmically undecidable, i.e. not decidable in finite time by a Turing machine with arbitrarily-large memory.  The QFP is, informally, the problem of determining whether an action alters the entanglement structure of the environment.  The undecidability of the QFP suggests that contextuality is ubiquitous, and that ``noncontextual'' settings depend for their definition on implicit fine-tuning assumptions.  We show, using a canonical Bell/EPR experiment as an example, that such hidden fine-tuning assumptions can mask contextuality.  We then review, in \S \ref{disc}, our previous results and those of others showing that contextuality can be induced by choice of basis and hence of QRF (\S \ref{choice}), by classical thermodynamic exchange (\S \ref{thermo}), and by the imposition of classical causal models (\S \ref{causal-models}).  We remark briefly in \S \ref{concl} on the relationship between contextuality and G\"{o}del's \cite{godel:31} celebrated 1st incompleteness theorem, and conclude that contextuality is in no way paradoxical, but is rather to be expected whenever finite observers measure (or prepare) multiple distinct degrees of freedom in any sufficiently-large environment.
\footnote{Contextuality in quantum mechanics is by now accepted as a state of affairs to be lived with. Curiously, noncontextuality seems to be an equally perplexing issue. Hofer-Szab\'{o} in \cite{hofer:22} distinguishes between {\em simultaneous} and {\em measurement noncontextuality} as distinct, logically
independent models. In the first case, an ontological (hidden variable) model is noncontextual if every ontic (hidden) state determines the probability of the outcomes of every measurement independently of whatever other measurements are simultaneously performed (otherwise the model is contextual). In the second case, an ontological model is noncontextual if any two measurements represented by the same self-adjoint operator, or equivalently, which have the same probability distribution of outcomes in every quantum state, also have the same probability distribution of outcomes in every ontic state (ditto). This distinction reflects upon the schism between operational and ontological models when dealing with contextuality, as exemplified by the fact that contextuality-by-default \cite{dzha:17a,Dzha2017b} and causal contextuality \cite{cavalcanti:18,gogioso:23a,gogioso:23b}, respectively, are different theories \cite{hofer:21}.}

\section{Separable QRFs induce contextuality} \label{context-QRF}

We begin by supposing that Alice comprises (at least) two distinct components $A_1$ and $A_2$ that implement distinct QRFs $Q_1$ and $Q_2$, respectively.  The QRFs $Q_1$ and $Q_2$ can be regarded as (quantum) computations, physically implemented by the internal interactions $H_{A_1}$ and $H_{A_2}$, respectively, that measure and dually, prepare states $|S_1 \rangle$ and $|S_2 \rangle$ (over time, densities $\rho_{S_1}$ and $\rho_{S_2}$) of distinct sectors $S_1$ and $S_2$, respectively, of the boundary $\mathscr{B}$ separating $A$ from its environment $B$; see \cite{fgm:22, fgm:21, fgm:22b} for details.  Then we can state:

\begin{lemma} \label{sep-lemma}
If systems $A_1$ and $A_2$ implement QRFs $Q_1$ and $Q_2$, respectively, and the joint state $|A_1 A_2 \rangle$ $($or density $\rho_{A_1 A_2})$ is separable as $|A_1 A_2 \rangle = |A_1 \rangle |A_2 \rangle$ $(\rho_{A_1 A_2} = \rho_{A_1} \rho_{A_2})$, then the commutator $[Q_1,Q_2] \neq 0$.
\end{lemma}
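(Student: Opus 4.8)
The plan is to argue by contradiction, leveraging the equivalence recalled in \S\ref{intro} (and established in our earlier work) between commutativity of QRFs and the existence of a shared classification. So suppose $[Q_1,Q_2]=0$. I may assume, by the representation theorem cited above, that each $Q_k$ is a cone-cocone diagram (CCCD) of Barwise--Seligman classifiers built over the sectors $S_1$ and $S_2$ of the \emph{common} boundary $\mathscr{B}$ that Alice presents to $B$, with cone apex the implementing system $A_k$ and internal dynamics $H_{A_k}$. The first step is to record what the vanishing commutator buys us: since $Q_1$ and $Q_2$ are self-adjoint (being assembled from the Hermitian $M_i^k$ of Eq.\ \eqref{ham}), commutativity yields a common eigenbasis, hence---by the contrapositive of the noncommutativity criterion---a well-defined joint probability distribution over the variables read on $S_1$ and $S_2$ and a well-defined global section of the sheaf assembled over $S_1\cup S_2$.

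Next I would convert this shared section into the statement that the two CCCDs amalgamate, over their shared boundary objects in $\mathscr{B}$, into a single CCCD. The decisive point is that $Q_1$ and $Q_2$ read the \emph{same} boundary $\mathscr{B}$, not independent tensor factors; consequently the amalgamated diagram possesses a single cone apex, i.e.\ a single reference system that reads both sectors at once. I would then show that a single common apex is incompatible with separability: a lone observer holding a consistent joint record over $S_1\cup S_2$ correlates the degrees of freedom of $A_1$ with those of $A_2$, so that the mutual information $I(A_1:A_2)$ is strictly positive. This contradicts $|A_1 A_2\rangle=|A_1\rangle|A_2\rangle$ (equivalently $\rho_{A_1 A_2}=\rho_{A_1}\rho_{A_2}$), for which $I(A_1:A_2)=0$. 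Hence the assumption fails and $[Q_1,Q_2]\neq 0$.

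The main obstacle, I expect, lies in the amalgamation step, because a naive reading of the claim is false: operators of the trivial product form $q_1\otimes\id$ and $\id\otimes q_2$ on $\mathcal{H}_{A_1}\otimes\mathcal{H}_{A_2}$ commute even on a product state. What must be made rigorous is that this loophole is closed precisely by the common-boundary hypothesis---$Q_1$ and $Q_2$ are \emph{distinct} QRFs deployed by the single agent $A$ toward the \emph{shared} screen $\mathscr{B}$, so their diagrams share the boundary objects $S_1,S_2\subset\mathscr{B}$, and merging forces the identification of the cone apexes $A_1$ and $A_2$ rather than leaving them on independent factors. Making that identification categorically precise, and verifying that it genuinely demands $I(A_1:A_2)>0$ rather than a mere relabelling, is the crux. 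The density-matrix case runs in parallel, with ``vanishing entanglement'' replaced throughout by ``vanishing mutual information,'' the residual classical-correlation subtlety being absorbed into the same requirement that a shared global section link the two sectors.
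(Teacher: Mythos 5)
Your overall strategy is the same as the paper's: argue by contraposition, represent each QRF by its CCCD, and show that commutativity forces the two diagrams to merge into a single one. The paper's proof does exactly this first step --- when $[Q_1,Q_2]=0$ the cores $\mathbf{C^\prime_1}$ and $\mathbf{C^\prime_2}$ map to a common core that is simultaneously colimit of all incoming and limit of all outgoing maps, yielding a single QRF $Q_1Q_2=Q_2Q_1$ --- and your ``amalgamation over the shared boundary objects'' is the same move. Where you diverge is in how you get from the merged diagram to entanglement of $|A_1A_2\rangle$, and this is precisely where your proposal has a gap that you yourself flag but do not close.

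The paper closes it by invoking Thm.~2 of \cite{fgm:22}: any QRF $Q$ is identified with a TQFT $\mathscr{T}_Q$, a cobordism of linear maps with copies of $\mathcal{H}_S$ as boundaries. A commuting pair therefore yields a single TQFT $\mathscr{T}_{Q_1Q_2}$, i.e.\ a \emph{single unitary process} that $A_1$ and $A_2$ jointly implement via $H_{A_1}$ and $H_{A_2}$ (Diagram \eqref{bipartite-ops}); joint implementation of one unitary over the whole time interval is what entails entanglement of $|A_1A_2\rangle$. Your substitute --- ``a lone observer holding a consistent joint record over $S_1\cup S_2$ has $I(A_1:A_2)>0$'' --- is not established by anything you cite, and it is exactly the point at which your own counterexample ($q_1\otimes\id$ and $\id\otimes q_2$ commuting on a product state, with two perfectly consistent but independent records) threatens the claim. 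Asserting that the common-boundary hypothesis ``closes the loophole'' is a restatement of what needs proving, not a proof; without the TQFT identification (or some equivalent argument that the merged core must be \emph{physically} implemented as one indivisible computation rather than two parallel ones), the step from a single cone apex to strictly positive mutual information does not go through. If you import the QRF-as-TQFT theorem at that point, your argument becomes essentially the paper's; as written, the decisive implication is missing.
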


\begin{proof}
We recall from \cite{fgm:22}, Thm. 1 that every QRF can be specified by a CCCD, i.e. a commutative diagram of the form:

\begin{equation}\label{cccd-vae}
\begin{gathered}
\xymatrix@C=6pc{\mathcal{A}_1 \ar[r]_{g_{12}}^{g_{21}} & \ar[l] \mathcal{A}_2 \ar[r]_{g_{23}}^{g_{32}} & \ar[l] \ldots ~\mathcal{A}_k \\
&\mathbf{C^\prime} \ar[ul]^{h_1} \ar[u]^{h_2} \ar[ur]_{h_k}& \\
\mathcal{A}_1 \ar[ur]^{f_1} \ar[r]_{g_{12}}^{g_{21}} & \ar[l] \mathcal{A}_2 \ar[u]_{f_2} \ar[r]_{g_{23}}^{g_{32}} & \ar[l] \ldots ~\mathcal{A}_k \ar[ul]_{f_k}
}
\end{gathered}
\end{equation}
\noindent
in which the $\mathcal{A}_i$ are Barwise-Seligman classifiers, the maps $f_i$, $h_i$, and $g_{ij}$ are Barwise-Seligman infomorphisms (i.e. morphisms of classifiers), and the ``core'' $\mathbf{C^\prime}$ is a Barwise-Seligman classifier that is simultaneously the category-theoretic colimit of the ``incoming'' maps $f_i$ and the category-theoretic limit of the ``outgoing'' maps $h_i$.  Details of this construction following the aforesaid concepts as established in \cite{barwise:97}, are provided in \cite{fg:19a}, and reviewed with further developments in \cite{fgm:22, fgm:21, fgm:22b}.  If the QRFs $Q_1$ and $Q_2$ commute, their corresponding CCCDs commute; in this case, the cores $\mathbf{C^\prime_1}$ and $\mathbf{C^\prime_2}$ can be mapped to a common core $\mathbf{C^\prime_{1,2}}$ that is the colimit of all incoming maps, and limit of all outgoing maps, thus defining a single CCCD over the combined sets $\{ \mathcal{A}_{1,i} \}$ and $\{ \mathcal{A}_{2,i} \}$ of classifiers of $Q_1$ and $Q_2$.  This CCCD specifies a single QRF $Q_1 Q_2 = Q_2 Q_1$.

We now recall from \cite{fgm:22}, Thm. 2 that the action of any QRF $Q$ induces a topological quantum field theory (TQFT) $\mathscr{T}_Q$ representable as a cobordism of linear maps with copies of the Hilbert space $\mathcal{H}_S$ of the sector $S$ on which $Q$ acts as boundaries.  As the data generated by $Q$ at discrete times $t_i, t_j$ are just $|S(t_i) \rangle$ and $|S(t_j) \rangle$, respectively, and as the TQFT $\mathscr{T}_Q |_{t_i \rightarrow t_j} : |S(t_i) \rangle \mapsto |S(t_j) \rangle$ by definition, we can simply identify $Q$ with $\mathscr{T}_Q$ \cite{fgm:23}.

Provided $Q_1$ and $Q_2$ commute, therefore, there is a TQFT $\mathscr{T}_{Q_1 Q_2} = \mathscr{T}_{Q_2 Q_1}$ that can be identified with the combined QRF $Q_1 Q_2 = Q_2 Q_1$; this TQFT can be represented by:

\begin{equation} \label{bipartite-ops}
\begin{gathered}
\begin{tikzpicture}[every tqft/.append style={transform shape}]
\draw[rotate=90] (0,0) ellipse (2.8cm and 1 cm);
\node[above] at (0,1.7) {$\mathscr{B}$};
\draw [thick] (-0.2,1.6) arc [radius=1.6, start angle=90, end angle= 270];
\draw [thick] (-0.2,1) arc [radius=1, start angle=90, end angle= 270];
\draw[rotate=90,fill=green,fill opacity=1] (1.3,0.2) ellipse (0.3 cm and 0.2 cm);
\draw[rotate=90,fill=green,fill opacity=1] (-1.3,0.2) ellipse (0.3 cm and 0.2 cm);
\node[above] at (-3,1.7) {Alice};
\node[above] at (2.8,1.7) {Bob};
\node[above] at (-2.4,-0.3) {$Q_1 Q_2$};
\node at (0.4,1.3) {$S_1$};
\node at (0.4,-1.3) {$S_2$};
\draw [ultra thick, white] (-0.8,1.3) -- (-1,1.3);
\draw [ultra thick, white] (-0.8,1.1) -- (-1,1.1);
\draw [ultra thick, white] (-0.8,0.9) -- (-1,0.9);
\draw [ultra thick, white] (-0.8,-0.9) -- (-1,-0.9);
\draw [ultra thick, white] (-0.8,-1.1) -- (-1,-1.1);
\draw [ultra thick, white] (-0.8,-1.3) -- (-1,-1.3);
\end{tikzpicture}
\end{gathered}
\end{equation}

where Alice and Bob are represented as the components of the bulk to the left and right of $\mathscr{B}$, respectively.  In this case Alice implements, and the components $A_1$ and $A_2$ jointly implement, a single unitary process $Q_1 Q_2$.  The joint state $|A_1 A_2 \rangle$ of the components $A_1$ and $A_2$ is, therefore, entangled over the entire time period of interest.

Separability of $|A_1 A_2 \rangle$ can, therefore, only be achieved if $Q_1$ and $Q_2$ do not commute, i.e. $[Q_1,Q_2] \neq 0$.
\end{proof}

We note that the converse of Lemma \ref{sep-lemma} is true trivially: if $Q_1$ and $Q_2$ do not commute, the joint state $|A_1 A_2 \rangle$ cannot be entangled.  We show in \cite{fgm:23} that the value of the commutator $[Q_1,Q_2]$ can be related to the entanglement of formation \cite{wootters:01} of mixed states of $B$ prepared by the joint action of $Q_1$ and $Q_2$.

If $Q_1$ and $Q_2$ commute, and hence the joint operation $Q_1 Q_2$ is well-defined as a QRF, then $Q_1$ and $Q_2$ are, in the language of \cite{fg:21}, {\em co-deployable} and the joint probability distribution over the states $|S_1 \rangle$ and $|S_2 \rangle$ of their respective sectors $S_1$ and $S_2$ is well-defined.  If $Q_1$ and $Q_2$ do not commute, and hence the joint operation $Q_1 Q_2$ is not well-defined as a QRF, then $Q_1$ and $Q_2$ are {\em non-codeployable} and no joint probability distribution over the states $|S_1 \rangle$ and $|S_2 \rangle$ of their respective sectors $S_1$ and $S_2$ is well-defined.  This latter condition corresponds to contextuality, as shown explicitly in \cite{fg:21} using methods developed in \cite{Dzha2017b, Dzha18}.  Hence we have:

\begin{theorem} \label{sep-th}
If systems $A_1$ and $A_2$ implement QRFs $Q_1$ and $Q_2$, respectively, and the joint state $|A_1 A_2 \rangle$ $($or density $\rho_{A_1 A_2})$ is separable as $|A_1 A_2 \rangle = |A_1 \rangle |A_2 \rangle$ $(\rho_{A_1 A_2} = \rho_{A_1} \rho_{A_2})$, then observational outcomes obtained with, and state preparations implemented with, the QRFs $Q_1$ and $Q_2$ will exhibit contextuality.
\end{theorem}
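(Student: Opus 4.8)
The plan is to obtain the statement as a short corollary of Lemma \ref{sep-lemma} together with the operational characterization of contextuality in terms of non-codeployable QRFs. First I would invoke Lemma \ref{sep-lemma} directly: the hypotheses of the theorem are identical to those of the lemma, so separability of $|A_1 A_2 \rangle$ (equivalently $\rho_{A_1 A_2} = \rho_{A_1}\rho_{A_2}$) immediately yields $[Q_1, Q_2] \neq 0$. The entire weight of the argument then lies in translating this non-commutativity into a statement about the joint statistics of the outcomes and preparations effected by $Q_1$ and $Q_2$.

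Second, I would use the co-deployability dichotomy recalled just before the theorem. Since $[Q_1, Q_2] \neq 0$, the combined operation $Q_1 Q_2$ fails to be well-defined as a single QRF; equivalently, the CCCDs of the form \eqref{cccd-vae} associated to $Q_1$ and $Q_2$ admit no common core $\mathbf{C^\prime_{1,2}}$, so $Q_1$ and $Q_2$ are non-codeployable. By the result of \cite{fg:21}, non-codeployability is precisely the condition under which no well-defined, well-behaved joint probability distribution over the sector states $|S_1\rangle$ and $|S_2\rangle$ exists.

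Third, I would identify the absence of such a joint distribution with contextuality. Here I would appeal to the criteria of \cite{fg:21}, established via the methods of \cite{Dzha2017b, Dzha18}, which equate the non-existence of a consistent joint distribution over jointly-accessed variables with contextuality in the sense of \cite{bell:66, kochen:67}. To cover both the measurement and the preparation halves of the conclusion, I would then invoke the measurement--preparation duality noted in \cite{pegg:10}: the same commutator obstruction governs the dually-defined preparation operation, so both observational outcomes and state preparations carried out with $Q_1$ and $Q_2$ inherit the contextual signature.

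I expect the main obstacle to lie not in the logical chaining, which is short, but in making precise that the \emph{single} object $[Q_1, Q_2] \neq 0$ simultaneously controls measurement and preparation contextuality. In particular, one must check that the joint distribution whose non-existence witnesses contextuality is exactly the distribution over the sector variables on which $Q_1$ and $Q_2$ act, and that no auxiliary hypothesis --- for instance a fine-tuning of the coefficients $\alpha^k_i$ in \eqref{ham} --- is tacitly introduced when passing from the categorical non-commutativity of the CCCDs to the probabilistic statement.
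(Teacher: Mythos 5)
Your proposal follows essentially the same route as the paper: invoke Lemma \ref{sep-lemma} to obtain $[Q_1,Q_2]\neq 0$, identify non-commutativity with non-codeployability, and then cite Thm.~7.1 of \cite{fg:21} (resting on the methods of \cite{Dzha2017b, Dzha18}) to conclude contextuality. The paper's own proof is just a two-sentence compression of exactly this chain, so your additional care about the preparation half and the joint distribution over $|S_1\rangle$, $|S_2\rangle$ is elaboration rather than a different argument.
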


\begin{proof}
We recall from \cite{fg:21}, Thm. 7.1 that non-commutativity, and hence non-codeployability of QRFs induces contextuality.  Whenever the conditions of Lemma \ref{sep-lemma} are met, the relevant QRFs are non-codeployable.
\end{proof}

As shown in \cite{fgm:22}, the set of all finite CCCDs that specify QRFs form a category $\mathbf{CCCD}$; the morphisms of this category compose CCCDs specifying QRFs to form larger CCCDs, and decompose CCCDs into components that are themselves CCCDs.  We can therefore consider the notion of a quiver representation for CCCDs, referring to \cite{seigal:22} for relevant definitions.  We have from \cite{fgm:22}, Thm. 9:
\begin{theorem}\label{section-th}
If a diagram of the form of Diagram \eqref{cccd-vae}, when viewed as a quiver, is noncommutative (and hence is not a CCCD), then at least one section of its quiver representation by vector spaces is not definable; in particular, at least one section of its representation by vector spaces of measurable functions is not definable.
\end{theorem}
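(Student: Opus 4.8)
The plan is to translate the defining property of a CCCD---that the core $\mathbf{C^\prime}$ is \emph{simultaneously} the colimit of the incoming infomorphisms $f_i$ and the limit of the outgoing infomorphisms $h_i$---into the language of quiver representations, and then to show that when this property fails (i.e. when Diagram \eqref{cccd-vae} is noncommutative as a quiver) the universal factorization that produces global sections is obstructed. First I would fix a representation $\Phi$ of the quiver underlying Diagram \eqref{cccd-vae} by vector spaces, in the sense of \cite{seigal:22}: each classifier $\mathcal{A}_i$ and the core $\mathbf{C^\prime}$ is assigned a vector space ($V_i$, respectively $V_{\mathbf{C^\prime}}$), and each infomorphism $g_{ij}$, $f_i$, $h_i$ is assigned a linear map. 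A \emph{section} of $\Phi$ is then a compatible tuple $(v_i)$ of vectors, one at each vertex, satisfying $\Phi(a)(v_{s(a)}) = v_{t(a)}$ for every arrow $a$; equivalently, a section is an element of $\varprojlim \Phi$, i.e. a global section of the presheaf of compatible local data attached to the diagram, matching the sheaf-theoretic reading of contextuality recalled in \S \ref{intro}.

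Next I would record the commutative case, both as a consistency check and as the source of the obstruction in the noncommutative case. When Diagram \eqref{cccd-vae} commutes it is a genuine CCCD, so $V_{\mathbf{C^\prime}}$ carries the universal properties of both a colimit and a limit in the category of vector spaces. Given compatible data on the lower chain of classifiers, the colimit property yields a unique factorization through $V_{\mathbf{C^\prime}}$, and the limit property then propagates this datum out along the maps $h_i$ to the upper chain, producing a well-defined global section. Thus in the commutative case every admissible local datum extends to a section.

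The crux is the noncommutative case. If Diagram \eqref{cccd-vae}, viewed as a quiver, is noncommutative, then there exist two directed paths $p$ and $q$ sharing a source vertex $u$ and a target vertex $w$ with $\Phi(p) \neq \Phi(q)$ as linear maps $V_u \rightarrow V_w$; by the layered shape of the diagram (two chains of classifiers joined through the single core), at least one such pair necessarily routes through the core, witnessing the failure of the universal property of $\mathbf{C^\prime}$. Any section $(v_i)$ must satisfy both $v_w = \Phi(p)(v_u)$ and $v_w = \Phi(q)(v_u)$, hence $v_u \in \Ker(\Phi(p) - \Phi(q))$, a proper subspace of $V_u$. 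Consequently any candidate section whose value at $u$ lies outside this kernel---a value that would extend to a section in the commutative case---cannot be propagated consistently, and at least one section of $\Phi$ is therefore not definable. I would then specialize $\Phi$ to the representation in which each $V_i$ is a space of measurable functions on the relevant state space and each arrow acts by the induced pullback or pushforward; the same path-discrepancy argument produces an explicit measurable function witnessing $(\Phi(p) - \Phi(q))(v_u) \neq 0$, so at least one section by vector spaces of measurable functions fails to be definable, which is the contextuality statement in the form used in \cite{Dzha2017b, Dzha18, fg:21}.

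I expect the main obstacle to be the step connecting noncommutativity of the abstract quiver to a genuine, nonzero discrepancy in the chosen representation. One must verify both that noncommutativity forces a competing pair of paths through the core---rather than a discrepancy that the special chain-and-core shape could render vacuous---and that the functor from Barwise-Seligman classifiers to vector spaces, in particular to spaces of measurable functions, does not collapse $\Phi(p) - \Phi(q)$ to zero. Establishing the latter, i.e. exhibiting an actual measurable function on which the two paths disagree, is the technical heart of the argument, and is where the structure of the infomorphisms developed in \cite{barwise:97, fg:19a} must be used.
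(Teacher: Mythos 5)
You should first note that the paper does not actually prove Theorem \ref{section-th}: it is imported verbatim as Thm.~9 of \cite{fgm:22}, so there is no in-paper argument to compare yours against line by line. Judged on its own terms, your reconstruction is the natural one and its core step is sound: taking a section to be an element of the flow space $\varprojlim \Phi$ in the sense of \cite{seigal:22}, two directed parallel paths $p,q$ from $u$ to $w$ with $\Phi(p)\neq\Phi(q)$ force every section to have $v_u\in\Ker\bigl(\Phi(p)-\Phi(q)\bigr)$, a proper subspace, so some local datum at $u$ admits no global extension. This is exactly the quiver-representation analogue of the non-existence of a global sheaf section in \cite{abramsky:11}, which is the reading the paper intends, and your commutative-case sanity check via the limit/colimit universal properties of $\mathbf{C^\prime}$ is consistent with how the paper uses CCCDs in Lemma \ref{sep-lemma}.

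The one genuine gap is the step you yourself flag, and you should not treat it as optional: the hypothesis of the theorem is noncommutativity of Diagram \eqref{cccd-vae} as a diagram of classifiers and infomorphisms, whereas your kernel argument needs noncommutativity of the induced linear representation, i.e. $\Phi(p)-\Phi(q)\neq 0$ on the chosen vector spaces. A priori the passage to vector spaces of measurable functions could identify the two composites. To close this you must use the structure of infomorphisms from \cite{barwise:97, fg:19a}: an infomorphism carries a (contravariant) map on tokens and a (covariant) map on types, and distinct set-maps induce distinct pullbacks on a function space that separates points, which spaces of measurable functions on the relevant state spaces do. With that faithfulness statement made explicit, your argument is complete; without it, the ``in particular'' clause of the theorem is asserted rather than proved. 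A secondary, smaller point: your claim that a violating pair of paths must route through the core is not needed for the obstruction and is not obviously true (the rows of $g_{ij}$'s can already fail to commute), so you should either drop it or prove it separately.
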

We note that Theorems \ref{sep-th} and \ref{section-th} describe conditions that, in the spirit of \cite{kochen:67}, induce break-downs of logical consistency; in particular, break-downs of logical consistency between the classifiers composing the CCCD representations of the relevant QRFs. They also bear obvious similarity to the main result of \cite{abramsky:11} proving that in some `measurement scenario', the non-existence of a global section of a sheaf of measurable distributions implies Kochen-Specker contextuality.\footnote{It is worth pointing out that a distributed system of information flow of \cite{barwise:97}, of which CCCDs (when considered as systems of logical gates) are an example, already embody notions of context and causation, and are especially suited for modeling ontologies. These characteristics have been professed in \cite{collier:11,seligman:09}, in which the former uses the logical formulism of \cite{barwise:97} to argue that causation itself may be viewed as a form of computation resulting from the regular relations in a distributed system (such as a CCCD), and the latter shows that for any pair of classifiers $\A,\B$ occurring in such a system, there exists some (logic) infomorphism between them such that $\A$ directly {\em causes} $\B$. In a companion theory of integrative information involving abstract logical structures known as {\em Institutions} (see e.g. \cite{goguen:05a,goguen:05c,kent:18}), the element of context is further emphasized. As briefly recalled in the proof of Lemma \ref{sep-lemma}, the basic mechanism of a QRF in conjunction with binary-valued classifiers provides a model for a generic topological quantum field theory \cite{fgm:22} with no need for further logical abstractions. The `Institutional' approach will, however, be addressed relative to the present development of ideas at a later date.}

From a physical perspective, the mechanism enforcing contextuality is clear from Diagram \eqref{bipartite-ops}: the QRFs $Q_1$ or $Q_2$ act, via their respective sectors $S_1$ and $S_2$ of $\mathscr{B}$, on the quantum system $B$.  Any such actions affect the state $|B \rangle$, of which the sector states $|S_1 \rangle$ and $|S_2 \rangle$ are effectively (via the actions of QRFs implemented by $B$) samples.  Hence if $Q_1$ and $Q_2$ do not commute, $A$'s actions with either render the outcomes of measurements made with the other unpredictable.  The unpredictability of side-effects of actions is the core of the FP \cite{mccarthy:69}.  Indeed, we have previously proved that a distributed information-flow system (e.g. a CCCD) can be informationally unencapsulated (i.e. embody an ``open environment'' solution to the FP) only in the absence of intrinsic contextuality  \cite[Cor 8.1]{fg:21}.  We can, therefore, expect that contextuality generates a quantum analog of this well-known, provably undecidable \cite{dietrich:20} problem.

\section{Contextuality renders the QFP undecidable} \label{QFP}

\subsection{Statement of the QFP} \label{QFP-statement}

The FP was originally formulated as the problem of efficiently characterizing the components of the overall state of the environment (in AI language, the states of and relations between objects in the environment) that do not need to be updated in consequence of an action \cite{mccarthy:69}.  An efficient means of predicting all side-effects of any action taken in any environment would solve the FP.  It is generally recognized that all practical solutions of the FP are heuristic, and rely on local, classical, causal models of the task environment \cite{nakashima:97, fields:13b}.  Local causal models being insufficient to predict the side-effects of an action naturally suggests the influence of hidden variables.  In a quantum setting, such hidden variables may be nonlocal in the sense of acting via (unobserved components of) the entanglement structure of the environment.  The simplest measure of entanglement structure of an environment $B$ is the {\em entanglement entropy}:
\begin{equation}\label{entanglement-1}
\mathcal{S}(B) =_{def} \mathrm{max}_{B_1, B_2 | B_1 B_2 = B} ~\mathcal{S} (|B_1 B_2 \rangle)
\end{equation}
Hence we can state, for any agent $A$ implementing an action $a$ on its environment $B$:

\begin{center}
{\bf QFP:} Does an action $a$ on an environment $B$ change the entanglement entropy $\mathcal{S}(B)$?
\end{center}

If the QFP can be answered with `no', local causal models may be sufficient, in practice, to solve the FP; hence the latter can be thought of, at least in practice, as the problem of efficiently modeling local causal relations.  If the QFP is answered `yes', all such local causal models are insufficient to solve the FP.

\subsection{The QFP is undecidable} \label{QFP-undec}

The FP has been shown \cite{dietrich:20} to be equivalent to the Halting Problem (HP), the problem of deciding whether an arbitrary algorithm halts on an arbitrary input \cite{turing:37, hopcroft:79}.  As the HP is known to be algorithmically undecidable, the interesting direction of proof is HP $\rightarrow$ FP.  This is straightforward: if whether an arbitrary algorithm halts on an arbitrary input is undecidable, whether the FP solution of an arbitrary agent $A$ halts given an arbitrary action $a$ is undecidable.

That the QFP is algorithmically undecidable is also straightforward \cite{dietrich:20}.  Decidability in finite time is equivalent to decidability given a finite number of finite inputs, i.e. finite-resolution observational outcomes.  The idea of ``measurement'' as a process that yields observational outcomes is physically meaningful only if the interacting systems $A$ and $B$ are separable, i.e. only if $\mathcal{S} (|A B \rangle) = 0$; otherwise Eq. \eqref{ham} fails to describe the $A$-$B$ interaction, and hence $A$ cannot be regarded as deploying a well-defined QRF or as recording an observational outcome as classical data.  Separability can, however, only be achieved if the $A$-$B$ interaction $H_{AB}$ is weak, which in turn requires that dim($\mathcal{H}_A$), dim($\mathcal{H}_B) \gg$ dim($H_{AB}$).  This is effectively a ``large environment'' assumption that allows the existence of nonlocal hidden variables in $B$.  Finite classical data obtainable by $A$ cannot, in this case, determine $\mathcal{S}(B)$ or even dim($B$); hence it cannot determine whether $\mathcal{S}(B)$ is altered by an action $a$ \cite{dietrich:20, fgm:21}.  Here we provide an alternative proof of this result that does not rely explicitly on a large environment, but rather highlights the role of contextuality.

\begin{theorem} \label{QFP-th}
The QFP is algorithmically undecidable.
\end{theorem}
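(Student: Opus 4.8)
The plan is to show that no Turing machine, fed finite-resolution observational outcomes, can decide the QFP, and to locate the obstruction in the contextuality results of \S \ref{context-QRF} rather than in a counting argument about $\dim(\mathcal{H}_B)$. The Halting Problem is algorithmically undecidable \cite{turing:37, hopcroft:79} and the classical FP is equivalent to it \cite{dietrich:20}; the content to supply here is a demonstration that any algorithm answering the QFP would have to assemble a consistent joint classical representation of its outcome data, and that Theorem \ref{section-th} forbids exactly such a representation whenever the underlying measurement is physically meaningful. In this way the undecidability is inherited through the FP--equivalence, but the barrier is exhibited as a failure of logical consistency rather than as a ``large environment'' assumption.

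First I would observe that answering the QFP requires Alice to determine, for the decomposition that maximizes Eq. \eqref{entanglement-1}, the entanglement entropy $\mathcal{S}(B)$ both before and after the action $a$. Determining $\mathcal{S}(B)$ is not a single measurement: the maximum over bipartitions $B_1 B_2 = B$ forces Alice to probe distinct sectors of the boundary $\mathscr{B}$ with distinct QRFs $Q_1, Q_2, \ldots$, each implemented by a distinct component $A_1, A_2, \ldots$ that records a finite-resolution outcome as classical data. For these records to count as classical data at all, Eq. \eqref{ham} must describe each interaction, which requires $\mathcal{S}(|A_i B \rangle) = 0$ and, in particular, separability of the joint state $|A_1 A_2 \cdots \rangle$ of Alice's probing components.

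Next I would invoke Lemma \ref{sep-lemma}: separability of $|A_1 A_2 \rangle$ forces $[Q_1, Q_2] \neq 0$, so the QRFs used to probe the distinct sectors are non-codeployable. By Theorem \ref{sep-th} the resulting outcomes are contextual, and by Theorem \ref{section-th} the associated quiver representation of Diagram \eqref{cccd-vae} admits no global section, i.e. there is no consistently definable joint assignment of outcomes across the distinct bipartitions. Consequently the single number $\mathcal{S}(B)$ that the QFP asks Alice to compare across the action $a$ is not fixed by any consistent finite set of observational outcomes. The final step is to note that a Turing machine deciding the QFP would have to halt with a definite bit computed from exactly such a consistent finite input; the non-existence of the global section is precisely the non-existence of that input, so no halting decision procedure exists. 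This reproduces, in quantum form, the self-referential obstruction underlying HP $\rightarrow$ FP, and matches our earlier observation that an information-flow system solves the FP only in the absence of intrinsic contextuality \cite[Cor 8.1]{fg:21}.

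I expect the main obstacle to be this last bridge: passing rigorously from ``no global section / contextual outcomes'' (a statement about sheaf-theoretic, or Barwise--Seligman, logical consistency) to ``not decidable by any Turing machine'' (a statement about computability). The cleanest way to close the gap is to identify the hypothetical decider's internal classical model of its outcome data with a section of the quiver representation of Diagram \eqref{cccd-vae}, and then apply Theorem \ref{section-th} to conclude that this section cannot be defined; any claimed halting run would thereby certify a consistent joint distribution that contextuality has already ruled out. Making the identification between ``the decider's input'' and ``a global section'' precise, without tacitly reintroducing a large-environment assumption through the back door, is where the real work lies.
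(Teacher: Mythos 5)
Your proposal rests on the same core mechanism as the paper's proof: determining $\mathcal{S}(B)$ requires more than one QRF, any collection of mutually-commuting QRFs collapses (via a common CCCD core) into a single QRF that is likewise insufficient, and Lemma \ref{sep-lemma} forces the only remaining case --- non-commuting QRFs implemented by separable components --- into the contextual regime of Theorem \ref{sep-th}. Where you diverge is the final step. You route it through Theorem \ref{section-th}: no global section of the quiver representation means no consistent joint classical record from which a decider could compute $\mathcal{S}(B)$, and you correctly flag that identifying ``the decider's input'' with ``a global section'' is the unfinished bridge. The paper closes that gap differently and more cheaply: it adopts the perspective of a single \emph{minimal} separable component (taking $H_{A_1 A_2}=0$ so each QRF acts only on $B$), observes that such a component commands only one effective QRF and hence at most the state of one sector of $\mathscr{B}$, and then notes that this same restriction applies to \emph{every} further minimal separable component of $A$ --- including any that accepts classical inputs from the others. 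There is therefore no locus within $A$ at which the data needed to evaluate $\mathcal{S}(B)$ could be assembled, and since $A$ is an arbitrary finite system the QFP is undecidable. That localization argument supplies exactly what your sketch is missing: it converts the inconsistency into the nonexistence of any physical subsystem that could serve as the decider, with no need to formalize a correspondence between Turing-machine inputs and sheaf sections. If you retain your framing, you must at least add the paper's observation about aggregator components; otherwise a skeptic can posit an internal classical integrator that simply concatenates the separate finite records and evades your no-global-section objection.
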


\begin{proof}
The entanglement entropy $\mathcal{S}(B)$ of Eq. \eqref{entanglement-1}clearly cannot be measured using just one QRF; hence given the results of \S \ref{context-QRF} above, $\mathcal{S}(B)$ cannot be measured using any combination of mutually-commuting QRFs.  Hence the only case of interest is that in which $A$ comprises separable components $A_i$ that, as required by Lemma \ref{sep-lemma}, implement QRFs (or sets of mutually-commuting QRFs) that do not mutually commute.  In general, any QRF $Q_i$ deployed by a component $A_i$ acts on some component of $A_i$'s environment, which is the joint system $B \prod_j A_j$ where $j \neq i$.  For simplicity, we consider only two components $A_1$ and $A_2$ that are {\em minimal} in that each only a single QRF (or set of mutually-commuting QRFs), $Q_1$ and $Q_2$ respectively, where $Q_1$ and $Q_2$ do not commute.  We also assume $H_{A_1 A_2} = 0$, so each of these QRFs acts only on $B$.  We can, in this case, consider the situation from the perspective of either of the minimal, separable components $A_1$ or $A_2$.  As each of these components has only a single (effective) QRF, neither component alone can measure $\mathcal{S}(B)$ or decide the QFP.  This same argument applies, however, also to any further minimal, separable components of $A$ that interact with either $B$ or any of the other $A_i$ (equivalently, to any further modular quantum computations implemented by $A$ that accept classical inputs from either $B$ or the $A_i$).  The entanglement entropy $\mathcal{S}(B)$ cannot, therefore, be measured by $A$, so the QFP cannot be decided by $A$.  As $A$ is an arbitrary finite system, the QFP is algorithmically undecidable.
\end{proof}

From a physical perspective, Lemma \ref{sep-lemma} restricts any minimal, separable component of a system $A$ to the use of its own computational resources -- which must all mutually commute -- to measure the state of its own environment, which may comprise other components of $A$.  Effectively restricting each separable component to a single QRF renders the environment of each such component ``large enough'' to prevent measurement of its entanglement entropy.

Theorem \ref{QFP-th} has two important corollaries:

\begin{corollary} \label{self-cor}
No finite system $A$ can measure the entanglement entropy $\mathcal{S}(|AB \rangle)$ across the boundary $\mathscr{B}$ that separates it from its environment $B$.
\end{corollary}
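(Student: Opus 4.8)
The plan is to derive the corollary directly from Theorem \ref{QFP-th} by recognizing that $\mathcal{S}(|AB\rangle)$ is simply the entanglement across the single bipartition $A \mid B$ fixed by $\mathscr{B}$, i.e. the von Neumann entropy of the reduced density $\rho_A$ (equivalently $\rho_B$), and that $A$ can only acquire information about this correlation through the QRFs it deploys across $\mathscr{B}$. The first observation I would make is the structural bind already built into the measurement framework: as recalled in the proof of Theorem \ref{QFP-th}, the description of an $A$--$B$ interaction by Eq. \eqref{ham}, and hence the deployment by $A$ of any well-defined QRF, presupposes separability, $\mathcal{S}(|AB\rangle) = 0$. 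Consequently, whenever $\mathcal{S}(|AB\rangle) > 0$ the agent $A$ is, by hypothesis, unable to record any classical outcome at all and so \emph{a fortiori} cannot report a value for $\mathcal{S}(|AB\rangle)$; whereas whenever $\mathcal{S}(|AB\rangle) = 0$ the value is already forced by the separability that the act of measuring itself assumes. In neither case does $A$ obtain the entropy by measurement rather than by presupposition.

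Second, I would show that even setting this self-referential bind aside, $A$ cannot assemble the data that a determination of $\mathcal{S}(|AB\rangle)$ requires. Reconstructing $\rho_A$ demands tomography of $A$'s own degrees of freedom; but by the setup of \S \ref{context-QRF}, every QRF deployed by a component of $A$ acts outward, on a sector of $\mathscr{B}$, and so samples $B$ rather than $A$. To distinguish an entangled $|AB\rangle$ from a product state, $A$ would have to correlate outcomes taken across $\mathscr{B}$ with distinct internal degrees of freedom of its own, i.e. deploy several QRFs probing non-commuting observables. By Lemma \ref{sep-lemma} and the argument of Theorem \ref{QFP-th}, any separable component of $A$ carries only a single effective (mutually commuting) family of QRFs, while QRFs carried by distinct separable components fail to commute and are therefore non-codeployable; no joint probability distribution over their outcomes exists, so no consistent joint estimate of $\rho_A$, and hence of $\mathcal{S}(|AB\rangle)$, can be formed.

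I would then close the argument exactly as in Theorem \ref{QFP-th}: since neither any single separable component nor any mutually-commuting family of components can measure the cross-boundary entanglement, and since non-commuting families are non-codeployable, $\mathcal{S}(|AB\rangle)$ cannot be measured by $A$ as a whole; as $A$ is an arbitrary finite system, no finite system can. I expect the main obstacle to be making the self-reference precise. Unlike Theorem \ref{QFP-th}, where the environment $B$ is wholly external to the measuring agent, here $A$ must characterize the entanglement across its \emph{own} boundary and so must treat part of itself as object rather than observer. The delicate point is to argue that $A$'s inability to access $\rho_A$ from within is not a contingent limitation but follows from the same non-codeployability that drives contextuality: the hypothetical QRF needed to observe $A$'s internal state and the QRFs $A$ uses to act on $B$ cannot be co-deployed without entangling $A$ with $B$, which the separability hypothesis $\mathcal{S}(|AB\rangle) = 0$ precisely forbids.
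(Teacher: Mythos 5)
Your proposal is correct and follows essentially the same route as the paper: the proof there also reduces to minimal separable components, invokes Lemma \ref{sep-lemma} to restrict each such component to a single QRF acting on a sector of $\mathscr{B}$, and observes that the maximum information so obtainable (the sector state, i.e.\ an eigenvalue of $H_{AB}$) cannot determine $\mathcal{S}(|AB\rangle)$. Your first prong --- that Eq.~\eqref{ham} presupposes separability, so $A$ cannot measure rather than assume $\mathcal{S}(|AB\rangle)=0$ --- is not part of the paper's proof but is precisely the remark the authors make immediately after the corollary, so it is a welcome (if redundant) reinforcement rather than a divergence.
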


\begin{proof}
Any finite system $A$ comprises some set of minimal, separable components, so it is enough to assume that $A$ itself is a minimal, separable system.  Lemma \ref{sep-lemma} restricts any minimal, separable system $A$ to a single QRF acting the boundary $\mathscr{B}$ that separates it from its environment $B$.  The maximum information obtainable is the state $|S \rangle$ of the sector $S \subseteq \mathscr{B}$ acted upon by this QRF \cite{fgm:22, fgm:21, fgm:22b}.
\end{proof}

Corollary \ref{self-cor} in fact follows directly from Eq. \eqref{ham}; the maximum information obtainable by $A$ at time $t$ is the $N$-bit encoding of an eigenvalue of $H_{AB}$.  We state it as a corollary of Theorem \ref{QFP-th} to emphasize its relation to the QFP: no finite system can determine that it is separable from, and hence only in classical causal interaction with, its environment.  The eigenvalues of $H_{AB}$ are also, clearly, insufficient to specify the entanglement entropy $\mathcal{S}(A)$; hence $A$ cannot determine whether it has separable components.

\begin{corollary} \label{isolation-cor}
No finite system $A$ can determine that any decomposition of its environment $B = B_1 B_2$ isolates all causal consequences of its actions in a single component $B_1$.
\end{corollary}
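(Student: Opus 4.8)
The plan is to reduce the isolation problem to the QFP and then close it off with the resource restriction established in \S \ref{context-QRF}. First I would make precise what it means for a decomposition $B = B_1 B_2$ to isolate all causal consequences of an action $a$ in $B_1$: this is exactly the Markov-blanket (screening) requirement that $B_1$ shield $B_2$ from $A$, so that $a$ leaves both the reduced state of $B_2$ and its entanglement with the remainder $A B_1$ invariant. Equivalently, isolation holds if and only if $a$ does not alter the entanglement entropy across the $A B_1 \mid B_2$ cut; any leakage of causal influence into $B_2$ — whether acting locally or propagating nonlocally through entanglement shared with $A$ or $B_1$ — must register as a change in $\mathcal{S}$ restricted to the $B_2$ sector.

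Second, I would observe that the question ``does $a$ change the entanglement entropy associated with $B_2$'' is, for an arbitrary decomposition and an arbitrary action, precisely an instance of the QFP of \S \ref{QFP-statement}, with $B_2$ playing the role of the (sub-)environment whose entanglement structure is at issue. Since $B_1$, $B_2$, and $a$ range over all decompositions and all actions, the induced QFP instance is arbitrary. Determining isolation for a given decomposition therefore decides this QFP instance, and by Theorem \ref{QFP-th} the QFP is algorithmically undecidable; hence no finite $A$ can determine that the decomposition isolates the causal consequences of $a$ in $B_1$.

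Third, to show that $A$ genuinely lacks the means to make this determination — rather than merely that the associated decision problem is hard in the abstract — I would invoke the physical restriction of \S \ref{context-QRF}. Any finite $A$ decomposes into minimal, separable components, each of which, by Lemma \ref{sep-lemma}, is confined to a single QRF (or a set of mutually-commuting QRFs), and by Corollary \ref{self-cor} no such component can measure the entanglement entropy of its own environment. Detecting entanglement-mediated leakage of $a$ into $B_2$ would require exactly such a measurement across the $A B_1 \mid B_2$ cut, which no separable component, and hence no finite $A$, can perform.

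The hard part will be step one: establishing the \emph{equivalence}, not merely the implication, between isolation and invariance of the $B_2$-entanglement. That a change in $\mathcal{S}$ across the cut signals a causal consequence outside $B_1$ is immediate; the converse — that every failure of isolation must manifest as a change in this entanglement entropy, so that the isolation problem is a genuine QFP instance and not something strictly weaker — is where care is needed, and where the nonlocal, entanglement-mediated character of quantum causal influence (as emphasized in the discussion following Corollary \ref{self-cor}) does the essential work.
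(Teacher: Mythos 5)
Your proposal is correct in overall strategy --- both you and the paper reduce the isolation question to an entanglement-entropy determination that Theorem \ref{QFP-th} rules out --- but the route differs in two respects worth noting. The paper's entire proof is one sentence: isolating all causal consequences in $B_1$ \emph{requires} certifying that $\mathcal{S}(|B_1 B_2\rangle) = 0$, i.e.\ that the two environmental components are unentangled across the internal $B_1 \mid B_2$ cut (their entanglement being precisely the channel through which consequences of acting on $B_1$ could leak into $B_2$), and Theorem \ref{QFP-th} forbids any finite $A$ from measuring that quantity. You instead formalize isolation as invariance, under $a$, of the entanglement entropy across the $A B_1 \mid B_2$ cut and run a reduction to a QFP instance. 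That buys a cleaner decision-theoretic statement, but it saddles you with the biconditional you correctly flag as the hard part --- and that burden is avoidable. The corollary asks whether $A$ can \emph{determine that} isolation holds, not whether it can decide isolation in both directions; hence an unverifiable \emph{necessary} condition already suffices, and the converse (that every failure of isolation registers as a change in $\mathcal{S}$) is not needed. Your third step, invoking Lemma \ref{sep-lemma} and Corollary \ref{self-cor} to show that $A$ physically lacks the resources to perform the required entanglement measurement, is essentially the content of the proof of Theorem \ref{QFP-th} restated, and is a legitimate way of making the impossibility physical rather than merely complexity-theoretic; it is not redundant, but the paper leaves it implicit in the citation of the theorem.
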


\begin{proof}
Isolating all causal consequences of its actions in $B_1$ requires determining that $\mathcal{S} (|B_1 B_2 \rangle) = 0$, which Theorem \ref{QFP-th} forbids.
\end{proof}

Corollary \ref{isolation-cor} shows that no system can isolate its own task environment.  The undecidability of the FP immediately follows.  More significantly for our purposes, no system $A$ can determine the context of any of its actions $a$.  Griffiths \cite{griffiths:19}, for example, shows that preparations and measurements become noncontextual when described in a consistent, i.e. commuting framework.  A framework is a context, extended over time.  Corollary \ref{isolation-cor} shows that no finite system can specify a consistent framework for its own actions.

\subsection{Example: Bell/EPR experiments} \label{bell-expt}

The most obvious example of measuring an entanglement entropy is a Bell/EPR experiment.  Such experiments have obviously been performed, and have obviously yielded results, some of the most important in contemporary physics \cite{aspect:81, aspect:82, aspect:99, bartosik:09, hensen:15, georgescu:21}.  Theorem \ref{QFP-th} implies, however, that such experiments can only be regarded as measurements of entanglement entropies subject to further assumptions that place {\em a priori} limits on contextuality.  It is instructive to examine these assumptions explicitly.

We can represent the set-up of a Bell/EPR experiment by the following state-space decomposition:

\begin{equation} \label{bell-setup}
\begin{gathered}
\begin{tikzpicture}
\draw [thick] (0,0) -- (0,4) -- (3,4) -- (3,0) -- (0,0);
\draw [thick] (1,1) -- (1,3) -- (3,3) -- (3,1) -- (1,1);
\draw [thick] (2,1) -- (2,3);
\draw [thick] (1,2) -- (2,2);
\node at (0.5,2) {$C$};
\node at (2.5,2) {$B$};
\node at (1.5,1.5) {$A_2$};
\node at (1.5,2.5) {$A_1$};
\end{tikzpicture}
\end{gathered}
\end{equation}

We interpret $A_1$ and $A_2$ as the observers, $B$ as the entangled pair together with the detectors, and $C$ as everything else, including whatever prepares the entangled pair, any 3rd parties, and the general environment.  We assume $H_{BC} \sim 0$; in particular, that it is insufficient to induce decoherence.  The pointer states of the detectors are, in this case, decohered only by interaction with the observers $A_1$ and $A_2$, consistent with the model of QRF-driven decoherence in \cite{fgm:21, fgm:22b}.

Each of $A_1$ and $A_2$ interacts with only one detector, so as is well-known, neither can observe entanglement in isolation.  The joint systems $A = A_1 A_2$ and $A C$ observe both detectors, so it is these systems that, in standard interpretations of the Bell/EPR set-up, observe entanglement; ``Charlie'' ($C$) becomes the locus of entanglement observation if $A_1$ and $A_2$ separately report their observations.  Hence the interactions $H_{A_1 A_2}$, $H_{A_1 C}$, and $H_{A_2 C}$ are critical for observing entanglement.

Assuming that $A_1$ and $A_2$ are separable, their interaction $H_{A_1 A_2}$ can be represented as classical communication \cite{fgm:22b}; indeed they are standardly represented as exchanging classically-encoded data once their observations are completed.  We can assume that this communication is free of classical noise.  Assuming that it is free of {\em quantum} noise, however, requires assuming that $A_1$ and $A_2$ share a reference frame, e.g. a $z$ axis if $s_z$ measurements are employed to communicate outcomes encoded as bit strings.  This reference frame must be physically implemented, so is a QRF.  Jointly implementing a QRF, however, induces entanglement as illustrated in Diagram \eqref{bipartite-ops}; see \cite{fgm:21, fgm:22b, fgm:23} for proofs and further discussion.  Equivalently, separability of $A_1$ and $A_2$ requires each to have free choice of basis for every measurement, including free choice of $z$ axis (i.e. ``language'') when communicating their results \cite{fgm:22b}.  Hence the common assumption that post-observation classical communication is conceptually unproblematic involves an assumption of superdeterminism: $A_1$ and $A_2$ must be assumed to choose the same basis when they prepare and measure classical encodings.  That ``classical communication'' always involves quantum measurement in this way has been noted previously \cite{tipler:14}; one can make similar remarks, clearly, about requirements for quantum measurements and prior agreement about choice of basis when multiple observers calibrate their detectors using a common standard.

While superdeterminism of choice of basis for post-observation classical communication is required to measure $\mathcal{S}(B)$, superdeterminism of choice of basis for the observations themselves (i.e. choice of ``settings'') prevents measurement of $\mathcal{S}(B)$ by rendering $A_1$ and $A_2$ entangled \cite{mermin:93}.  We can, therefore, see superdeterminism of choice of basis for post-observation classical communication as a specific, {\em a priori, ceteris paribus} or ``nothing else changes'' \cite{nakashima:97} assumption that is required to render $A_1$ and $A_2$ able to perform a joint measurement \cite{bohr:28, mermin:18}.  Assumptions that circumscribe the task environment by ruling out external influences are the basis of all heuristic solutions to the FP \cite{dietrich:96, nakashima:97}.\footnote{The approach to causality (and hence to the FP) in \cite{nakashima:97} is via the theory of situations (i.e. contexts), the latter being cast in \cite{seligman:09} within the category of information flow of classifications \cite{barwise:97}.}

Such {\em ceteris paribus} assumptions are clearly assumptions of {\em noncontextuality}: the larger context in which the experiment is performed is assumed to be constrained in a way that specifically superdetermines a particular classical, causal process -- classical information exchange between $A_1$ and $A_2$ -- while not superdetermining anything else.  This {\em a priori} constraint is a clear instance of fine-tuning.  Indeed Cavalcanti \cite{cavalcanti:18} has shown that any classical causal model that violates a Bell inequality or demonstrates Kochen-Specker contextuality will involve fine-tuning, and hence will violate an essential principle of the classical causal modeling framework, namely, that such models should not be fine-tuned. A counterpoise, as pointed out in  \cite{shrapnel:19}, is the result of Wood and Spekkens \cite{wood:15} asserting that any classical causal model reproducing the observed statistics of the Bell correlations must be fine-tuned (equivalently `unfaithful'); see \cite{adlam:21} for review and further discussion.

\section{Induced contextuality} \label{disc}

The results of \S \ref{context-QRF} and \ref{QFP} above suggest that {\em all} multi-QRF measurements, and in particular all measurements made by observers comprising multiple, separable components, are contextual.  In this case, contextuality becomes not just a default assumption as in \cite{Dzha2017b}, but rather a principled requirement.  If this is the case, all measurement settings considered ``non-contextual'' involve implicit assumptions, e.g. assumptions of fine-tuning.

As discussed in \cite{fg:21}, the ubiquity of contextuality is suggested by the very existence of the FP: if task environments could be circumscribed non-heuristically, or even reliably circumscribed heuristically, the FP would not arise.  Practical ``solutions'' of the FP are, however, based on heuristics of unknown reliability.  This is because task environments cannot, in practice, be fully isolated or fully characterized in advance.  Unexpected side-effects cannot, therefore, be ruled out.

While the FP is often attributed to the environment being ``large'' or ``open,'' we have seen in the above discussion that it in fact follows from the physics of measurement, and thus characterizes all environments.  Here we consider three specific contextuality-inducing mechanisms in more detail.

\subsection{Basis choice induces contextuality} \label{choice}

It is well-known that the observability of entanglement depends on choice of basis \cite{zanardi:01, zanardi:04, torre:10, harshman:11, thirring:11}; a Bell state is separable if measured in the Bell basis $(|10 \rangle \pm |01 \rangle)/\surd 2$.  If $A_1$ and $A_2$ deploy a Bell basis, however, they are {\em ipso facto} entangled; choosing the Bell basis merely transfers the entanglement from the measured state to the observers.  Theorem \ref{sep-th} generalizes this to arbitrary QRFs: if $A_1$ and $A_2$ deploy non-commuting QRFs (and hence are separable), they will observe contextuality; if they deploy commuting QRFs (and hence are entangled), they will not.

The classical limit of non-contextual observations by separable observers -- e.g. spatially-separated observers -- generalizes the assumption of fine-tuning of classical communication found in the case of Bell/EPR experiments.  Observations made by distinct, spatially-separated observers are guaranteed to commute only if 1) the observers deploy the same QRFs, and 2) the environment being measured is causally uniform, i.e. homogeneous and isotropic.  The existence of equivalent observers and the causal uniformity of the environment are central, if largely implicit, postulates of the classical worldview.  As they require precise values of relevant constants and rapid dissipation of early inhomogeneities, they are effectively fine-tuning assumptions, as is broadly acknowledged \cite{barnes:22, frankel:22}.

\subsection{Thermodynamic exchange induces contextuality} \label{thermo}

Irreversibly recording classical information requires free energy \cite{parrondo:15}: hence all information processing that results in persistently-recorded data requires thermodynamic exchange with a free-energy source that we can assume, without loss of generality, to be external to any system $A$ that acts as an observer.  This can be made explicit with the state-space decomposition:

\begin{equation} \label{thermo-ex}
\begin{gathered}
\begin{tikzpicture}
\draw [thick] (0,0) -- (0,3) -- (3,3) -- (3,0) -- (0,0);
\draw [thick] (1,3) -- (1,0);
\draw [thick] (2,3) -- (2,0);
\draw [thick] (1,1.5) -- (2,1.5);
\node at (0.5,1.5) {$A_{mp}$};
\node at (2.5,1.5) {$B$};
\node at (1.5,0.8) {$A_i$};
\node at (1.5,2.2) {$A_{th}$};
\end{tikzpicture}
\end{gathered}
\end{equation}

in which the component $A_{th}$ engages in thermodynamic exchange with the environment $B$, the $A_i$ engage in preparation and measurement of $B$ as in Diagram \eqref{bell-setup} above, and $A_{mp}$ is a metaprocessing component that performs tasks such as outcome recording and control-flow management.  The internal interactions $H_{A_{th} A_i}$ and $H_{A_{th} A_{mp}}$ implement internal thermodynamic exchange.  The overall $A$-$B$ interaction is $H_{AB} = H_{A_{th} B} + H_{A_{i} B}$.

The measuring components $A_i$ in Diagram \eqref{thermo-ex} have no access to the thermodynamic interaction $H_{A_{th} B}$ and cannot determine its effects on $B$ \cite{fg:20a, fgm:21}.  Hence $H_{A_{th} B}$ defines, in effect, an unknown context for the measurement interaction $H_{A_{i} B}$.  Any assumption that $H_{A_{th} B}$ has no effect on the state components of $B$ from which the $A_i$ obtain observational outcomes is a fine-tuning assumption.

\subsection{Causal models induce contextuality} \label{causal-models}

When QRFs are considered to be implemented by CCCDs, and hence as effectively being networks of logical gates, it is clear that QRFs themselves -- indeed, any physically-implemented instruments, including human brains -- constitute (partial) causal models of the general environment $B$ (see \cite{frisch:14} for general discussion).  Coarse-graining of classically recorded outcomes (or ``amplification'' in the language of \cite{bohr:28}) can also induce {\em causal emergence}, a form of (informational) symmetry breaking describable in terms of {\em effective information} \cite{tononi:03} as developed in \cite{hoel:13,hoel:17}.  When formulated in terms of the Shannon noisy-channel coding theorem \cite{cover:06}, causal emergence occurs when a coarse-grained or macroscale description (a ``map'') of a system is more informative about its (classical) causal capacity than a microscale description.

Any finite causal model can be represented as a finite directed acyclic graph (DAG) via the causal Markov condition (CMC, \cite{geiger:90,pearl:09}).  Any such finite DAG is noncontextual; it specifies a task environment that is free, by definition, of unspecified side effects.  Hence as shown by \cite{cavalcanti:18}, imposing a finite causal model on any physical situation amounts to a fine-tuning assumption: all influences of the larger environment on the modeled environment are assumed to cancel out.  Here entanglement is ruled out by requiring {\em no disturbance} \cite[Def 2]{cavalcanti:18}, i.e. that $P(A \vert XY) = P(A \vert X)$ and $P(B \vert XY) = P(B \vert Y)$, for all values of the variables $A,B,X,Y$ for which these conditionals are defined.\footnote{This sense of `no disturbance' has also been called `no-signaling', where in relativistic accordance, such pairs of measurements $X,Y$ are made in spacelike separated regions (see also e.g. \cite{abramsky:14b,frembs:19}). }  This suggestion of {\em causally-induced contextuality} is affirmed in \cite{gogioso:23a,gogioso:23b}, where ambient causal constraints themselves become context dependent.  The approach taken in \cite{gogioso:23a,gogioso:23b} is analogous, in part, to that of \cite{abramsky:14b}, in adapting constraint functions and input histories to create a presheaf of causal distributions, but departs from \cite{abramsky:14b} in so far that under certain conditions, the constructed presheaf does not form a sheaf.

Within the framework of Deep Learning (DL), any trained network implements a causal model.  A precondition to prove algorithmic effectiveness is the causal faithfulness condition (CFC); any causal Bayesian network satisfying the CMC and CFC admits a unique Markov blanket (MB, \cite{guyon:03,guyon:07,pellet:08}). In a Bayesian network, the MB of any class variable provides all of the information required to predict its value.  This allows an optimal solution of the Feature Selection (FS) problem which, if solvable, offers advantages including data compression and effectiveness in prediction. From an algorithmic aspect, introducing causality into FS may enhance robustness and mechanistic activity, upon which causal analysis may determine whether or not the FS solution obtained is causally informative, or if some part of it is spurious \cite{guyon:03,guyon:07,pellet:08,guo:22}. More generally, there are several classical error correction programs analogous to quantum error correction, or possibly derivable from the latter in the classical limit, where effectively the presence of a holographic screen implements an MB (see discussions in \cite{fgm:22b,fgm:23,fields-et-al:23}); in particular, those correction/error-awareness algorithms suitable for FS that identify the MB of the class variable (e.g. \cite{guo:22, aliferis:03,koller:96}).

\section{Conclusion} \label{concl}

We have seen here how the phenomenon of contextuality in physical preparations or measurements follows from the separability -- non-entanglement -- of observers or their components, and how the undecidability of the QFP similarly hinges on separability.  Practical measurements of entanglement by separable observers -- e.g. in Bell/EPR experiments -- avoid these consequences via a fine-tuning assumption.  Such assumptions can invoke ``preferred'' or {\em a priori} selected QRFs, assumptions that thermodynamic exchange is decoupled from measurement in the observer, the environment, or both, or more general impositions of {\em a priori} causal models that circumscribe the task environment.  The results of \S \ref{QFP-undec}, in particular, show that no observer can determine that a causal model captures all the potential side-effects of any action.

Abramsky \cite{abramsky:18} has previously considered the relationship between contextuality and ``liar paradox'' sentences, e.g. ``this sentence is false.''  Such sentences play a key role in proofs of G\"{o}del's 1st incompleteness theorem, where such a sentence is proven to be true if and only if it is not provable within a particular axiomatic system \cite{godel:31}.  As a principled restriction on the applicability of classical causal models, which like proofs comprise sequences of sentences connected by law-like relations, contextuality speaks, like G\"{o}del's theorem, to the limitations of logical or lawful deduction.  Corollary \ref{isolation-cor} is particularly G\"{o}delean, as it shows that some side effects (true sentences) cannot be predicted by any given causal model (proof).  The insufficiency of ``laws'' as a basis for physics in the quantum era has of course been noted previously \cite{wheeler:83}.

While contextuality is often (e.g. in \cite{abramsky:18}) regarded as paradoxical, when the conditions for contextuality are restated in terms of the FP or the QFP, they appear somewhat inevitable.  A solution to either the FP or the QFP would, indeed, seem to require an infinite or omniscient observer \cite{dietrich:96, dietrich:20}.  The fine-tuning assumptions that crop up ubiquitously in situations described as noncontextual reinforce this, suggesting that it is the assumption of noncontextuality -- and the Laplacian omniscience it seems to require -- that is paradoxical. It is a paradox which carries with it an irreconcilable difference between operational and ontological models. Perhaps that is one reason why quantum theory -- when viewed as a {\em mechanical} theory of moving objects -- remains up to the present day a theory that falls short of being universally comprehendible, as R. Feynman has famously asserted.

%%%%%%%%%%%%%%%%%%%%%%%%%%%%%%%%%%%%%%%%%%%%%%%%%%%%%%%%%%%%%%%%%%%%%%%%%%%%%%%%%%%%%%%%%%%

\section*{Conflict of interest}
The authors report no conflicts of interest involved in this work.

\section*{Acknowledgement}
We thank Eric Dietrich and Antonino Marcian\`{o} for prior discussions on the frame problem and contextuality, respectively.

\section*{Funding}
This work received no external funding.

%%%%%%%%%%%%%%%%%%%%%%%%%%%%%%%%%%%%%%%%%%%%%%%%%%%%%%%%%%%%%%%%%%%%%%%%%%%%%%%%%%%%%%%%%%%%%%

\end{document}